\tikzset{mynode/.style={draw, very thick, circle, minimum size=1.1cm},
    myarrow/.style={very thick, -Triangle}}
\newtheorem{definition}{Definition}
\newtheorem{example}{Example}
\newtheorem{fact}{Fact}
\newtheorem{corollary}{Corollary}
\newtheorem{theorem}{Theorem}
\newcommand{\N}{\ensuremath{N}} %agents
\newcommand{\D}{\ensuremath{D}} %delegation profile
\newcommand{\CalD}{\ensuremath{\mathcal{D}}} %delegation profile
\renewcommand{\P}{\ensuremath{\mathbbm{P}}} %probability operator
\newcommand{\E}{\ensuremath{\mathbbm{E}}} %expectation operator
\newcommand{\set}[1]{\left\{ #1 \right\}}
\newcommand{\tuple}[1]{\left\langle #1 \right\rangle}
\newcommand{\1}{\ensuremath{\mathbbm{1}}}
\begin{document}
%{\fontfamily{lmss}\selectfont

\title{
Delegations as Adaptive Representation Patterns:
Rethinking Influence in Liquid Democracy}
\author[1]{Davide Grossi}
\author[2]{Andreas Nitsche}
\affil[1]{University of Groningen and University of Amsterdam}
\affil[2]{Association for Interactive Democracy Berlin}% and Project LiquidFeedback
\date{June 2025}

%%%%%%%%%%%%%%%%%%%%%%%%%%%%%%%%%%%%%%

\maketitle

\epigraph{Even in reasoning upon some subjects, it is a mistake to aim at an unattainable precision. It is better to be vaguely right than exactly wrong.}{\textit{Carveth Read \\ Logic: Deductive and Inductive (1898)}
}

\begin{abstract}
Liquid democracy is a mechanism for the division of labor in decision-making through the transitive delegation of influence. In essence, all individuals possess the autonomy to determine the issues with which they will engage directly, while for other matters, they may appoint a representative of their choosing.
So far, the literature has studied the delegation structures emerging in liquid democracy as static. As a result, transitivity defined as the capacity to transfer acquired authority to another entity, has been identified as a concern as it would be conducive to unrestrained accumulation of power.

Focusing on the implementation of liquid democracy supported by the LiquidFeedback software, we propose a novel approach to assessing the influence of voting nodes in a transitive delegation graph, taking into account the process nature of real-world liquid democracy in which delegation and voting are distinct and increasingly independent activities. By introducing a novel model of delegations in liquid democracy, we show how transitivity may in fact contribute to an effective regulation of deliberation influence and decision-making power. 
While maintaining the one-person, one-vote paradigm for all votes cast, the anticipated influence of an agent, to the extent it is stemming from transitivity, experiences a precipitous decline following an exponential trajectory. In this context, the emergence of representation patterns will be examined, the topic of delegation cycles will be revisited, and the notion of gurus will be challenged.
While this approach does not claim universal validity, it introduces a crucial perspective that is indispensable for evaluating the ramifications of implementation decisions, measures intended to curtail power, and alternative delegation models. 
%It will be exemplified how interventions with commendable intentions can yield unintended consequences.

In general, it is our objective to move the first steps towards a rigorous analysis of liquid democracy as an adaptive democratic representation process. The adaptivity aspect of liquid democracy has not yet been explored within the existing academic literature despite it being, we believe, one of its most important features. We therefore also outline a research agenda focusing on this aspect of liquid democracy.

\bigskip
{\bf Keywords:} Liquid Democracy, Transitive Delegation, Delegation Graph, Collective Intelligence Systems, Delegation Loops, Networked Representation, LiquidFeedback

\end{abstract}

%\begin{figure}%[htp]
    %\includegraphics[width=0.5\textwidth, right]{images/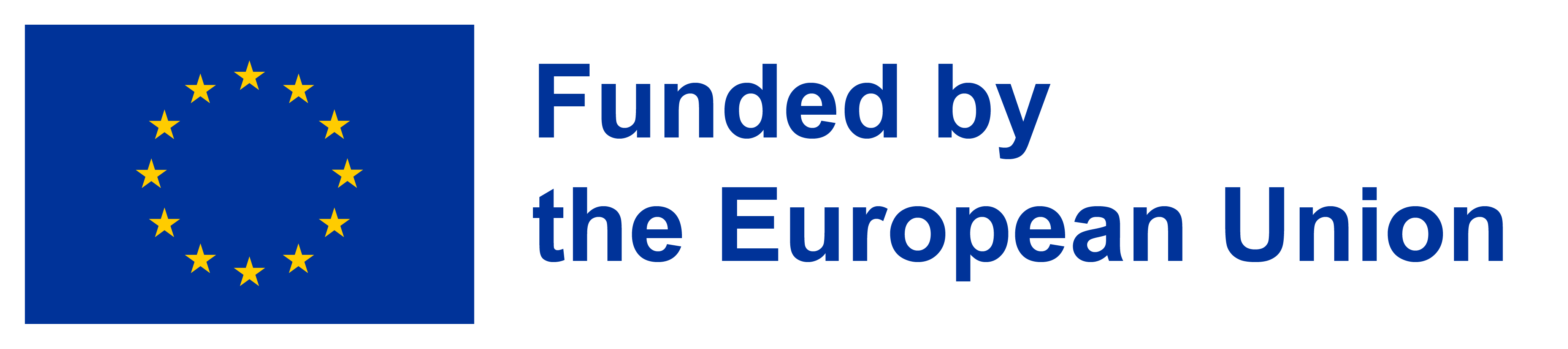}
%\end{figure}

\newpage
\tableofcontents
%\newpage
\section{Introduction}

\paragraph{Liquid Democracy.}
When direct member influence is implemented within an organization, it inevitably gives rise to several critical questions. Primarily, it is essential to ascertain whether all constituents possess a desire to engage with every matter that comes before the body. Moreover, it is important to consider the possibility that individuals may harbor divergent interests, leading to a diverse array of preferences regarding the matters that they wish to have a direct voice in. 

Liquid democracy has the potential to provide an adaptive solution to these issues \cite{behrens2014principles,blum2016liquid,valsangiacomo2021political,valsangiacomo2025six}. In essence, liquid democracy enables direct participation whenever deemed appropriate by an individual member, without imposing undue burden. Liquid democracy is predicated on the notion of individual freedom of choice, whereby members are empowered to select the topics with which they wish to engage and may adjust these preferences over time, consequently shaping their own position within the two-dimensional [topic x time] continuum that delineates direct and representative democracy. In the extreme, members can exercise autonomy in all matters without the need for delegation, thereby embodying a direct democratic spirit. Conversely, members have the option to delegate their rights comprehensively, thus adopting a representative democratic framework. The efficacy of liquid democracy is most evident in contexts with a wide thematic range where the division of labor is pronounced, such as in the governance of organizations, including key democratic institutions.\footnote{It may be worth noting at the outset that civic participation does not necessarily align with this criterion. Frequently, civic engagement is characterized by a monothematic nature or a limited time frame and scope. Conversely, if a civic participation system is regarded as an ongoing democratic infrastructure, liquid democracy can assume an instrumental role.}

\paragraph{Aim of the paper.} The last decade has witnessed a growing literature on liquid democracy: from political science \cite{blum2016liquid,valsangiacomo2021political,valsangiacomo2025six} to, in particular, (computational) social choice \cite{brandt2016handbook}. The latter strand of literature has been very productive in developing a variety of models in an attempt to analyze the standard implementation of liquid democracy, and of several variants thereof (see \cite{papasotiropoulos2025liquid} for a recent extensive overview of such literature). Despite the many insights provided by this literature, many aspects of even the standard implementation of liquid democracy have remained either hidden or only partially touched upon. In this paper we aim at bringing such limitations into light by insisting on the need for models of liquid democracy that can do justice to its adaptivity. 
We illustrate the benefits that such a modeling move can enable by proposing a novel and, we argue, better informed understanding of influence in the presence
of transitive delegations. 

We strongly believe that targeted research in this area could provide more solid foundations for existing software implementations of liquid democracy systems, as well as stimulate evidence-based improvements to the current deployments of such systems. 

\paragraph{Special Focus: Transitive Delegations.}
In this paper we will focus on the standard `prototypical' implementation of liquid democracy, as realized for instance in the LiquidFeedback software.\footnote{\url{https://liquidfeedback.com/en/}}.
%The system under consideration is a dynamic transitive liquid democracy system with 
in which agents that receive delegations from \emph{none, one, or multiple} agents and forward their own and all received delegations to \emph{none or one} agent. A voting agent casts their own vote and all votes received via delegation. In the case of a non-voting agent, the agent in question relays their own vote and all votes received by delegations if an outgoing delegation is in effect.

The fundamental components of liquid democracy are \cite{behrens2014principles}: (1) direct participation, (2) representation through delegation, (3) topic specific delegations, (4) transitivity of delegations, (5) change of delegations, (6) instant recall of delegations. Constituents possess the capacity to exercise direct democratic rights through voting or to delegate their voting rights to another individual. The delegation of voting rights can be constrained to specific subject matter, and these delegations can be further transferred to other individuals. Additionally, these delegations can be modified or recalled at the discretion of the delegator.

In this paper, we specifically focus on the fourth of the above components: transitivity of delegations. This principle enables a flexible and scalable delegation mechanism through trust-based networks. By enabling the passage of votes through multiple delegation steps, liquid democracy fosters a hybrid model that integrates the inclusivity of direct participation with the efficiency of representative systems. One of the primary advantages of transitivity lies in its capacity to empower individuals to delegate voting authority to peers, rather than to public figures or institutional elites. It is important to note that this delegation does not necessitate that the peer be an expert or even be involved in the decision-making process. Instead, the peer has the option to further delegate to an agent who possesses greater knowledge or engagement. In this manner, transitivity fosters the development of organic representation within a decentralized framework. The\emph{ one-person, one-vote} paradigm is maintained for all votes cast. 
%This system structure is implemented by systems such as the abovementioned LiquidFeedback. 

\paragraph{Contribution and Structure of the Paper.} 
After having reviewed what we consider to be core features of liquid democracy (Section \ref{sec:LF}), we illustrate (Section \ref{sec:expected}) the importance of giving the adaptivity aspect of liquid democracy a proper position in liquid democracy models. We do so by developing a novel theory of influence where transitive delegations emerge as an important feature to modulate power in the system, rather than compound its accumulation. The mathematics of the model is developed in more details in the technical appendix, which we consider integral part of the paper. Finally, in Section \ref{sec:outlook} we discuss a number of directions of research which we believe to be of special importance to improve our understanding of liquid democracy and, therefore, also its implementation in software and its deployment in practice.

%%%%%%%%%%%%%%%%%%%%%%%%

\section{
Liquid Democracy in LiquidFeedback
%-style Implementation
} \label{sec:LF}

In this section, we recollect what we considered to be the key features of liquid democracy in the form currently implemented in the LiquidFeedback software.

\subsection{Beyond Voting: Deliberation and Decision Making}

Liquid democracy is a system that provides a general mechanism for the division of labor through the delegation of influence. This influence is often interpreted as voting; however, it can actually be applied to any quantifiable activity. This aspect has not yet been addressed in academic literature.

In the LiquidFeedback  \cite{behrens2014principles} implementation of liquid democracy, such activities include assigning greater weight to a person's argument or support in a deliberation. Consequently, this also increases an agent's influence in collective moderation schemes and in identifying viable voting options. In the subsequent discussion, the term "voting" is employed to denote all quantifiable activities. It is imperative to recognize the significance of this matter, as the capacity to contribute to deliberation, and more specifically, the development of viable voting options, is of equal importance to the act of casting votes.

\subsection{Evolving Delegation Graph: Domain Specific Representation Patterns}

Liquid democracy has been conceptualized as a system that aims to alleviate the workload on agents. This objective would be unattainable if the delegation structure had to be established from the ground up with each new decision, as it would necessitate every participant to make a delegation decision on every subject. Consequently, LiquidFeedback resorts to delegation structures that evolve over time, thereby creating a representation pattern on how to address new decisions.  

A hierarchy of delegations, an overlay of default delegations, policy area-specific defaults and topic-specific delegations, give rise to multiple domain-specific representation patterns.\footnote{This delegation hierarchy does not constitute an integral component of the prototypical implementation. However, it is fully compatible with the it. A consolidated delegation graph is in effect for a given topic at any given time, in which the most specific delegation per agent for the topic in question is applied. This consolidated graph fulfills all the assumptions concerning the delegation graph that were outlined for the prototypical reference implementation above.} Agents can delegate once, have their preferences apply broadly, and refine as they go. This further alleviates the burden on participants and enhances the practicability of liquid democracy.

Rather than reconfiguring from scratch for every issue, the system builds and refines domain-specific representational patterns over time. The representation patterns delineate the weight of a potential agent activity. These represent an epistemic specialization among agents, referring to potential weight upon action. Agents who demonstrate consistent proficiency in specific domains are likely to accrue greater potential influence. However, this does not influence the ability to engage directly for any other agent.

%There is an overlay of default, default by policy area, and topic-specific delegations. These lead to the emergence of decision-making structures that are adapted according to the policy area and, if necessary, refined for specific topics. 

%However, a consolidated delegation graph is in effect for a given topic at any given time, in which the most specific delegation per agent for the topic in question is applied. This consolidated graph fulfills all the assumptions concerning the delegation graph that were outlined for the prototypical reference model.

%It is possible to comprehend the representation patterns in liquid democracy as a form of cognitive architecture. This architecture enables the system to scale efficiently without overwhelming participants. By facilitating the evolution of delegation structures, liquid democracy fosters an implicit organizational memory—that is, a system that encodes experience, expertise, and trust relationships into the very fabric of decision-making.

%These pathways function analogously to heuristics in the human brain, facilitating streamlined decision-making by leveraging prior learning and established trust networks.

%\bigskip
%\textit{ Example: 

\begin{example}
In the context of a LiquidFeedback system, a constituent of an organization is endowed with the capacity to appoint another member of the organization to act as their proxy on all matters that the appointing member does not engage themselves. Furthermore, the appointing member can override this choice by appointing another member to represent them in a particular policy domain or on a specific subject. In either scenario, the member retains the capacity to act directly, which results in the automatic suspension of the delegation for the particular activity.
\end{example}

\begin{figure}[t]
    \centering   \includegraphics[width=1\linewidth]{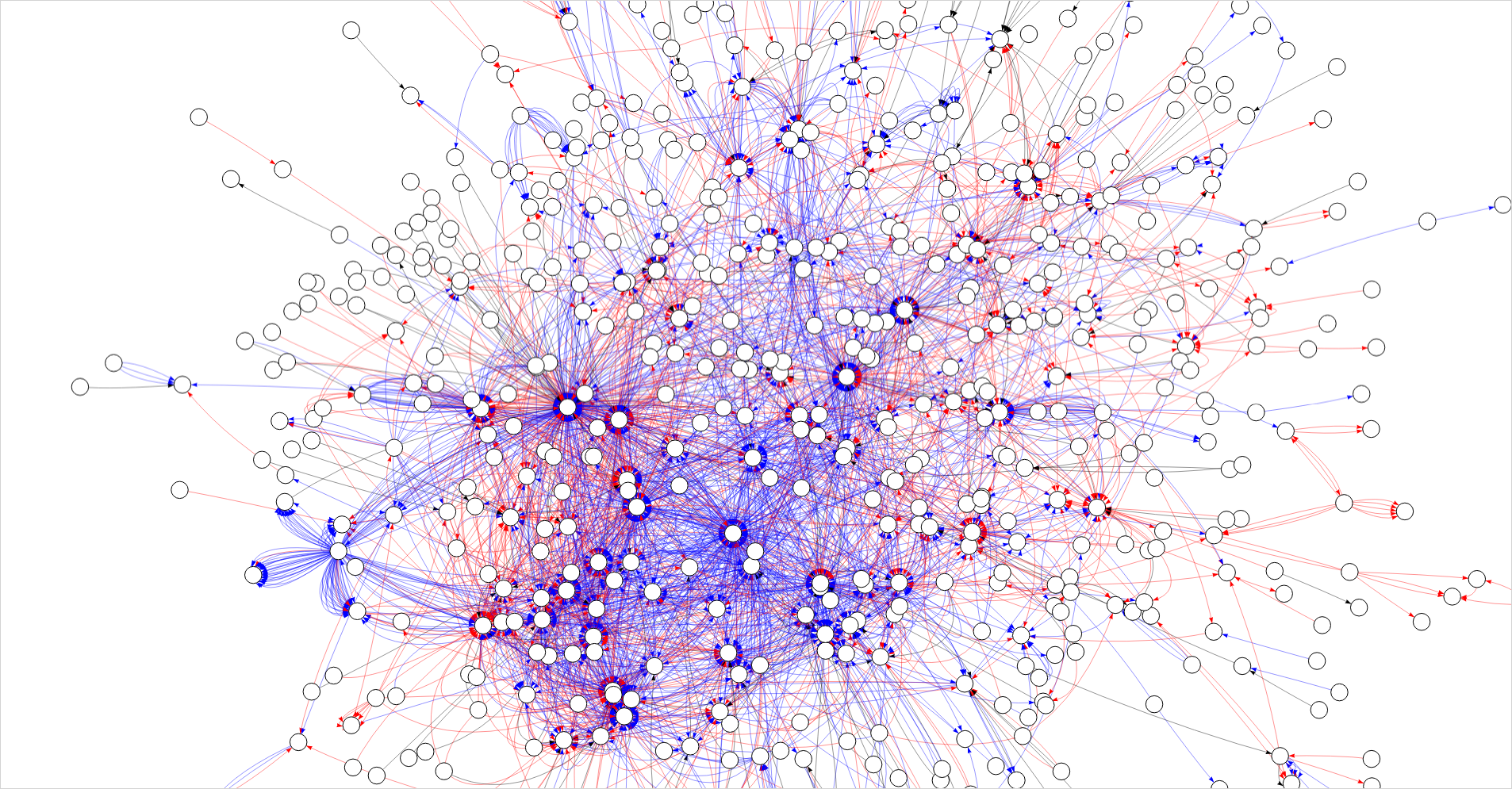}
    \caption{\emph{LiquidFeedback Delegation Overlay} 
    Default delegations %for the enclosing organizational unit 
    (black), default delegations by policy area (red) as well as issue specific delegations (blue)}
\label{fig:Delegation_Overlay}
\end{figure}

\subsection{Delegation Features: Recall and Suspension, Defaults}

\paragraph{Recall and Suspension.}

Agents possess the capacity to execute any quantifiable activity pertaining to a particular issue in a direct manner. In the event that this occurs, it is imperative that the outgoing delegation is not in effect for that specific issue and activity. Theoretically, this could be achieved by recalling a delegation (instant recall); however, such an action would be impractical and would undermine the division of labor. Instead, LiquidFeedback suspends delegations for any activity a participant directly exercises for a given issue.

Both recall and suspension exist in LiquidFeedback. While recalling a delegation constitutes an active action, suspending a delegation is passive in nature and is triggered by another activity, such as discussion, support, or voting.\footnote{According to the established reference model, the suspension of a delegation is a viable course of action. This is due to the model's design, which allows for the recalling and subsequent renewal of a delegation at a later point in time.} 

In the event that an agent has lost confidence in their proxy, the most rational course of action would be to initiate a recall. However, if the agent's objective is merely to contribute to a discussion, this step can be omitted. The implementation of an automatic suspension ensures that the delegation is in full effect for the subsequent voting on the designated issue, as well as for the discussion and voting on any other issue that falls within the scope of this delegation.

%\subsection{Delegations in Abundance}

%\paragraph{Defaults.} 
%The determination of delegation is made from the perspective of an agent who is tasked with answering the fundamental question, "Who, if anyone, is to act on my behalf in the event that I do not act myself?" %Depending on the implementation, this choice may be further refined, e.g., for policy areas.

\paragraph{Defaults.} The determination of delegation is made from the perspective of an agent who is tasked with answering the fundamental question of who, if anyone, is to act on their behalf in the event that they do not act themselves.
%\paragraph{Preemptive Delegations.} %alternative: Advance Delegations
There is no requirement for agents to make an immediate decision regarding their direct engagement in the deliberation process of an issue and—as an independent decision—whether to cast their vote directly. Indeed, in the case of default or default by policy area delegations, these will be applicable to future issues that are not yet known to the agent at the time the delegation is created. For certain issues, an agent may already have the intention to act directly or to be represented. Nevertheless, even in these cases, there may be circumstances that lead to a deviation from the intention.

Unless agents strongly oppose representation, for example, as direct democracy proponents, it is rational to establish backup delegations for all matters by designating a default delegation and potentially refining it. Delegating agents always retain the capacity to act directly, which results in the automatic suspension of the delegation for the particular activity. 

In a LiquidFeedback-style implementation, the delegation graph contains no actionable information regarding the potential actions or voting intentions of an agent.

\subsection{Delegation Structures: Endpoints and Gurus, Cycles}

\paragraph{Endpoints and Gurus.} Agents devoid of an outgoing delegation are designated as endpoints, insofar as they lack the capacity to transfer their own vote or any vote received through delegation. These endpoints are often accentuated as being of particular significance and are sometimes referred to as "gurus" if they receive a significant number of incoming delegations, presuming that they are the agents which will effectively cast votes. Nevertheless, from a practical standpoint, there is no substantiation to support this assumption. 

The absence of an outgoing delegation is no commitment to exercise the right to vote. Conversely, the presence of an outgoing delegation is no indication of an agent's intent to abstain from voting. In most cases, default and default by policy area delegations are established prior to the creation of an issue. A weak correlation may exist, particularly in instances of issue-specific delegations, which are the lowest level in the delegation hierarchy and are generally established posterior to the emergence of an issue.

Subsequently, it will be demonstrated that the influence of an agent is contingent on topographical factors independent of that agent's position as an endpoint.

\paragraph{Delegation Cycles.}
A delegation cycle occurs when the delegation choice of an agent and the choices of their delegation choice lead back to this agent. They have either no effect or are automatically resolved by the suspension mechanism if any of the agents in the cycle votes.  Cycles may occur inadvertently, particularly protracted cycles, and are at times deliberately engineered. The deliberate cycles are ordinarily brief cycles within the same level, predominantly within the default or default by area level. 

We examine two exemplary cycles: \textbf{1.} A mutual delegation \( a\) → \(b\) → \( a \). It is hereby asserted that, without loss of generality, \( a \) casts a vote, thereby bringing the cycle to a resolution: \( b\) → \( a \). It is possible that both \( a \) and \( b\) have received additional delegations from outside the cycle. \( a \) votes with their own vote, the vote of \( b\) and all delegations to \( a \) and \( b\) from outside the cycle. Observe that \( b\) can still vote in which case the delegation to \( a \) is suspended. In this case, both agents are decoupled and each agent individually casts their own votes and delegations, each of them has received from other agents. \textit{\textbf{2.} \( a \) → \( b\) → \( c\) → \( a \). }It is hereby asserted that, without loss of generality, \( a \) casts a vote, thereby bringing the cycle to a resolution: \( b\) → \( c\)  → \( a \). It is possible that \( a \), \( b\) and \(c\) have received additional delegations from outside the cycle. This is illustrated in Figure \ref{fig:cycle}.

\begin{figure}[t]
\begin{center}
\begin{tikzpicture}
\node[mynode](n0) at (0,0){3};
\node[mynode](n1) at (-2,-3){3};
\node[mynode](n2) at (2,-3){3};
\node[mynode](n3) at (7,0){3};
\node[mynode](n4) at (5,-3){1};
\node[mynode](n5) at (9,-3){2};
\draw[myarrow](n0)--(n1);
\draw[myarrow](n1)--(n2);
\draw[myarrow](n2)--(n0);
\draw[myarrow](n4)--(n5);
\draw[myarrow](n5)--(n3);
\draw[myarrow](n3)--(4.5,0);
\end{tikzpicture}
\end{center}
\caption{\emph{Delegation cycle} (left)  vs. \emph{resolved cycle} after initial vote cast}
\label{fig:cycle}
\end{figure}
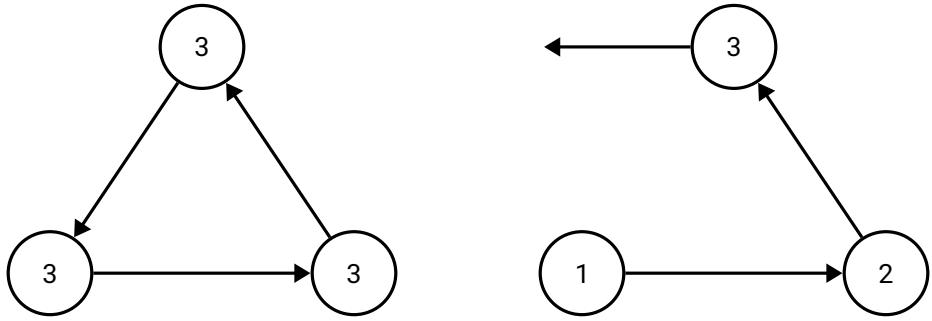

\bigskip
All agents in the cycle can be interpreted as a compound endpoint with every agent within the cycle able to initially trigger all delegations reaching the cycle, i.e. delegations within the cycle and delegations reaching the cycle from outside the cycle. 

From the perspective of an agent, deliberate cycles can be employed as a mechanism to mitigate the risk of forgetting to cast a vote. It is evident that cycles have no negative impact on the number of votes cast; rather, they increase the probability of more votes cast. Cycles are beneficial at most.

%In summary, if maximizing votes is considered advantageous, cycles are beneficial at most.

\section{
A Process-Centered Model of Influence
%Expected Voting Weight
} \label{sec:expected}

In this section we suggest a novel model of influence in liquid democracy, which is able to capture, albeit in a stylized fashion, the key aspects of delegations we emphasized in the previous section. In particular it captures explicitly the possibility of agents transitioning between the two modes of representation available in liquid democracy: direct voting, and representation via proxies. The exposition in this section favors intuitions over formalism. A more technical exposition of the content, including further references to existing literature, is provided in the technical appendix.

\subsection{Assuming All Agents Vote with a Probability of 50\% }

The current understanding of influence in liquid democracy is based on a simple accrual idea \cite{christoff2017binary,golz2021fluid}: the weight of an agent amounts to the number of other agents that delegate directly or indirectly to them. We refer to this notion as {\em potential weight}. See figure \ref{fig:potential_expected} (left) for an illustration.

Our objective here is to develop an alternative understanding of the process through which influence is accumulated within a delegation system. To that end, we endeavor to calculate the expected number of votes a agent is likely to cast. We initially operate under the simplifying assumption that all agents cast a vote with an equal probability of 50\%.

From the perspective of a voting agent \( t \), we want to determine how many total votes they cast when they vote, considering the delegation process where \( t \) represents an agent in a transitive delegation voting system at step t.

\begin{itemize}
    \item Each agent other than \( t \) votes directly with 50\% probability.
    \item If an agent does \textbf{not} vote, they transfer their own vote and all votes received through delegation to their chosen delegate, if such a delegate exists.
    \item An agent that votes casts their own vote plus any votes they have received through delegation.
\end{itemize}

If \( t \) votes, they cast:  
\begin{enumerate}
    \item their own vote (always 1).
    \item All votes delegated to them from other agents that chose not to vote and successfully transferred their votes down the chain.
\end{enumerate}

For any agent \( i \) that delegates to \( t \), their vote reaches \( t \) only if:
\begin{itemize}
    \item \( i \) does not vote (50\% probability).
    \item The delegation continues through any intermediate agents without interruption.
\end{itemize}

Since each agent only transfers their own vote and any received delegations if they do not vote, the probability of a vote reaching \( 
\) from an agent at delegation distance \( d \) is:
\begin{equation}
(0.5)^d
\end{equation}
where \( d \) is the number of delegation steps before the vote reaches \( t \).

The total expected number of votes \( t\) casts when they vote is:

\begin{equation}
\E[V_t] = 1 + \sum_{i \in S} (0.5)^{d_i}
\end{equation}
where:
\begin{itemize}
    \item \( V_t \) represents the total number of votes cast by agent \( t \) when they vote.
    \item \( S \) is the set of all agents that may eventually delegate to \( t \).
    \item \( d_i \) is the delegation distance from agent \( i \) to \( t \). 
\end{itemize}

\medskip

%\paragraph{Special Cases}
We now illustrate further the above model considering two extreme cases of delegation structures: linear delegation chains, and direct delegations (all delegation paths are of length $1$).

\paragraph{Linear Chain of Delegations.}
   If \( t \) is at the end of a single delegation chain of length \( n \), the sum is a finite geometric series, which simplifies to:
   \begin{equation}
   \E[V_t] = 2 \times (1 - (0.5)^{n+1})
   \end{equation}
   For large \( n \), the expected number of votes approaches $2$.

%\bigskip
%\textit{ Example: 
\begin{example}
In accordance with the one-person-one-vote principle, an agent that has received $10000$ delegations via a linear delegation chain can initially cast $10001$ votes (i.e., their own vote and the 10,000 delegations). Given a probability of $50\%$ for agents to cast a direct vote, the expected number of votes that ultimately will be cast by this agent is approximately 2.
\end{example}

\paragraph{Direct Delegations.}
   If \( t \) receives k direct delegations, the sum increases accordingly:
   \begin{equation}
   \E[V_t] = 1 + \frac{k}{2}
   \end{equation}
The expected number of votes cast grows linearly with the number of incoming direct delegations.

%\bigskip
%\textit{ 
\begin{example} \label{ex:direct}
In accordance with the one-person-one-vote principle, an agent that has received $10000$ delegations via all direct delegations can initially cast $10001$ votes (i.e., their own vote and the 10,000 delegations). Given a probability of $50\%$ for agents to cast a direct vote, the expected number of votes that ultimately will be cast by this agent is $5001$.
\end{example}

\subsection{Generalization to Arbitrary Uniform Probabilities}

In this section we generalize the intuitions outlined in the previous section to a general model of influence in liquid democracy. The model is further studied in the technical appendix.

\paragraph{Expected Number of Votes Cast by \( t \).}
 
%From the perspective of a voting agent \( t \), we want to determine how many total votes it casts when it votes, considering a delegation process where:  

We now assume that each agent votes directly with probability \( p \) (instead of a fixed $50\%$).

%\begin{itemize}
%    \item Each agent votes directly with probability \( p \) (instead of a fixed 50\%).
%    \item If a agent does \textbf{not} vote, it delegates their own vote and all votes received through delegation to their chosen delegate, if such a delegate exists.
%    \item An agent that votes casts their own vote plus any votes it has received through delegation.
%\end{itemize}
  
%If \( t \) votes, it casts:  
%\begin{enumerate}
%    \item their own vote (always 1).
%    \item All votes delegated to it from other agents that chose \textbf{not} to vote and successfully passed their votes down the chain.
%\end{enumerate}

For any agent \( i \) that delegates to \( t \), their vote reaches \( t \) only if:
\begin{itemize}
    \item \( i \) does not vote (probability \( 1 - p \)).
    \item The delegation continues through any intermediate agents without interruption.
\end{itemize}

Since each agent delegates only if it does not vote, the probability of a vote reaching \( t \) from an agent at delegation distance \( d \) is:

\begin{equation}
(1 - p)^d
\end{equation}

where \( d \) is the number of delegation steps before the vote reaches \( t \).

The total expected number of votes \( t \) casts when it votes is:

\begin{equation}
\E[V_t] = 1 + \sum_{i \in S} (1 - p)^{d_i} \label{eq:expect}
\end{equation}

where:
\begin{itemize}
    \item \( V_t \) represents the total number of votes cast by agent \( t \) when they vote.
    \item \( S \) is the set of all agents that may eventually delegate to \( t \).
    \item \( d_i \) is the delegation distance from agent \( i \) to \( t \).
\end{itemize}
In the technical appendix we will be more precise about the type of expectation that is built into Equation \eqref{eq:expect}.

It is important to observe right away that the expected number of votes cast by an agent in a cycle can be calculated in the same way as anywhere else in the delegation graph with the understanding that the outgoing delegation of \( t \) is not taken into account, as the assumption for calculating  E[\(V_t] \) is that \( t \) casts a vote and consequently the outgoing delegation is suspended.

%\paragraph{Special Cases}

\medskip
As done above, we now consider two extreme cases of delegation structures: linear delegation chains, and direct delegations (all delegation paths are of length $1$).

\paragraph{Linear Chain of Delegations.}
   If \( t \) is at the end of a single delegation chain of length \( n \), the sum simplifies to:

   \begin{equation}
   \E[V_t] = 1 + \frac{1 - (1 - p)^{n+1}}{p}
   \end{equation}

   For \( n \to \infty \) this approaches:
   \begin{equation}
   %E[V_t] \approx 1 + \frac{1}{p} - 1 = \frac{1}{p}
   \E[V_t]  = \frac{1}{p}
   \end{equation}

Here are the expected number of votes cast for a \emph{long delegation chain} (\( n \to \infty \)) at different voting probabilities \( p \):  

\begin{center}
\begin{tabular}{ c c }
 \( p \) (Voting Probability)& Expected Votes \( \E[V] \) \\ 
 1\%  (\( 0.01 \))& 100\\  
 2\%  (\( 0.02 \))& 50\\ 
 5\%  (\( 0.05 \))& 20\\ 
 10\%  (\( 0.1 \))& 10\\ 
 20\%  (\( 0.2 \))& 5\\ 
 50\%  (\( 0.5 \)) & 2\\ 
 80\%  (\( 0.8 \)) & 1.25\\ 
\end{tabular}
\end{center}

%\textit{Example: 

\begin{example}
In a delegation chain, a $1\%$ probability for casting a vote is assumed, the expected number of votes cast would be $100$, which is less than $1\%$ in any liquid democracy system with more than $10000$ participants.
\end{example}

\paragraph{Direct Delegations.}

If \( t \) only receives direct delegations, the sum increases accordingly:
  \begin{equation}
   \E[V_t]=1+k  \times (1 - p)
   \end{equation}
The expected number of votes cast grows linearly with the number of incoming direct delegations. 

\subsection{Limitations and Implications of the Model}

In this section we articulate what we consider to be the key implications of the above model and emphasize some if its limitations.

\paragraph{Uniform probabilities.} The above model is certainly a simplification of how delegations and their suspension work in liquid democracy, first and foremost because it assumes that all agents vote with equal probability. This assumption is not granted in reality. Individual attachment to the topic, as well as factors such as available time, become pertinent. Moreover, the gravity of the decision influences the level of participation, with high-stakes decisions being characterized by a heightened level of participation.

%\subsection{Expected Votes in Delegation Cycles}

\paragraph{Power Accumulation and Transitive Delegations.}
\label{par:accumulation}

Despite the above simplification however, the model still allows us to make the key observation that transitivity does not play a significant role in the accumulation of power.
%, provided that intervention points, as observed in prototypical liquid democracy, exist. 
To the contrary, and perhaps counter to intuitions, direct delegations appear to be a more likely contributor to the accumulation of persistent power compared to transitive delegations. 
The phenomenon of transitivity, or the ability to pass on received power to another individual, actually appears to play a crucial role in maintaining a balance of power. By enabling the delegation of authority to peers rather than to prominent figures, transitivity fosters a more inclusive and egalitarian decision-making environment. Furthermore, it establishes intervention points that allow for the modification or even the reversal of power dynamics, thereby promoting stability and fairness within the system.  

We therefore observe that the imposition of limitations on transitive delegations may actually foster the accumulation of power, rather than contrasting it. In the context of a transitive graph characterized by extensive delegation chains, the anticipated influence experiences a precipitous decline, following an exponential trajectory. Limiting the number of hubs or imposing a cost on the transfer of delegations is likely to motivate direct delegation to the perceived voter, resulting in the aggregation of direct delegations and, as an unintended consequence, encouraging the accumulation of less constrained power. This exemplifies that even interventions with commendable intentions can yield unintended consequences. Such an understanding is critical for evaluating the ramifications of implementation decisions, measures intended to curtail power, and alternative delegation models.

It is hereby suggested that the expected number of votes serve as an effective indicator of sustained power accumulation. Figure \ref{fig:potential_expected} illustrates the above intuitions.
We select again uniform probability for direct voting of 50\%, obtaining 
%\begin{equation}
$
\E[V_t] = 1 + \sum_{i \in S} (0.5)^{d_i}.
$
%\end{equation}
The figure provides a visual comparison of the \emph{potential voting weight} and the \emph{expected voting weight}. The values for the expected number of votes were rounded to the third decimal place where applicable.

\begin{figure}[t]
\scalebox{0.6}{
\begin{tikzpicture}

\node[mynode](n0) at (0,0){7};
\node[mynode](n1) at (160:7){9};
\node[mynode](n2) at (180:5){9};
\node[mynode](n3) at (140:5){9};
\node[mynode](n4) at (180:7){3};
\node[mynode](n5) at (200:7){2};
\node[mynode](n6) at (220:6.7){1};
\node[mynode](n7) at (180:3){1};
\node[mynode](n8) at (140:2){1};
\node[mynode](n9) at (100:3){1};
\node[mynode](n10) at (70:3){1};
\node[mynode](n11) at (40:3){1};
\node[mynode](n12) at (10:3){1};
\node[mynode](n13) at (340:3){1};
\node[mynode](n14) at (310:3){1};
\node[mynode](n15) at (280:3){1};
\node[mynode](n16) at (25:4.5){1};
\node[mynode](n17) at (355:4.5){2};
\node[mynode](n18) at (325:4.5){3};
\node[mynode](n19) at (295:4.5){4};
\node[mynode](n20) at (265:4.5){5};
\node[mynode](n21) at (235:5){6};
\node[mynode](n22) at (220:3){8};
\node[mynode](n23) at (200:5){15};

\draw[myarrow](n1)--(n2);
\draw[myarrow](n2)--(n3);
\draw[myarrow](n3)--(n1);
\draw[myarrow](n4)--(n1);
\draw[myarrow](n5)--(n4);
\draw[myarrow](n6)--(n5);
\draw[myarrow](n7)--(n3);
\draw[myarrow](n8)--(n3);
\draw[myarrow](n9)--(n3);
\draw[myarrow](n10)--(n0);
\draw[myarrow](n11)--(n0);
\draw[myarrow](n12)--(n0);
\draw[myarrow](n13)--(n0);
\draw[myarrow](n14)--(n0);
\draw[myarrow](n15)--(n0);
\draw[myarrow](n16)--(n17);
\draw[myarrow](n17)--(n18);
\draw[myarrow](n18)--(n19);
\draw[myarrow](n19)--(n20);
\draw[myarrow](n20)--(n21);
\draw[myarrow](n21)--(n23);
\draw[myarrow](n0)--(n22);
\draw[myarrow](n22)--(n23);

\end{tikzpicture}

\hspace{0.5cm}
\begin{tikzpicture}

\node[mynode](n0) at (0,0){4};
\node[mynode](n1) at (160:7){2.875};
\node[mynode](n2) at (180:5){2.625};
\node[mynode](n3) at (140:5){3.475};
\node[mynode](n4) at (180:7){1.75};
\node[mynode](n5) at (200:7){1.5};
\node[mynode](n6) at (220:6.7){1};
\node[mynode](n7) at (180:3){1};
\node[mynode](n8) at (140:2){1};
\node[mynode](n9) at (100:3){1};
\node[mynode](n10) at (70:3){1};
\node[mynode](n11) at (40:3){1};
\node[mynode](n12) at (10:3){1};
\node[mynode](n13) at (340:3){1};
\node[mynode](n14) at (310:3){1};
\node[mynode](n15) at (280:3){1};
\node[mynode](n16) at (25:4.5){1};
\node[mynode](n17) at (355:4.5){1.5};
\node[mynode](n18) at (325:4.5){1.75};
\node[mynode](n19) at (295:4.5){1.875};
\node[mynode](n20) at (265:4.5){1.938};%1,9375
\node[mynode](n21) at (235:5){1.969};%1.96875
\node[mynode](n22) at (220:3){3};
\node[mynode](n23) at (200:5){3.484};%3.484375

\draw[myarrow](n1)--(n2);
\draw[myarrow](n2)--(n3);
\draw[myarrow](n3)--(n1);
\draw[myarrow](n4)--(n1);
\draw[myarrow](n5)--(n4);
\draw[myarrow](n6)--(n5);
\draw[myarrow](n7)--(n3);
\draw[myarrow](n8)--(n3);
\draw[myarrow](n9)--(n3);
\draw[myarrow](n10)--(n0);
\draw[myarrow](n11)--(n0);
\draw[myarrow](n12)--(n0);
\draw[myarrow](n13)--(n0);
\draw[myarrow](n14)--(n0);
\draw[myarrow](n15)--(n0);
\draw[myarrow](n16)--(n17);
\draw[myarrow](n17)--(n18);
\draw[myarrow](n18)--(n19);
\draw[myarrow](n19)--(n20);
\draw[myarrow](n20)--(n21);
\draw[myarrow](n21)--(n23);
\draw[myarrow](n0)--(n22);
\draw[myarrow](n22)--(n23);

\end{tikzpicture}
}
\caption{\textbf{Potential voting weight} (left) vs. \textbf{expected voting weight} (right)}
\label{fig:potential_expected}
\end{figure}

\paragraph{Most Influential Agents and Gurus.}

In the context of liquid democracy in the real world, graphs are characterized by a combination of direct and transitive delegations. In a LiquidFeedback-style implementation characterized by the aforementioned properties, the agents that exert the most significant influence are those that possess a high number of direct delegations, and to a lesser extent, those within a short distance of agents receiving multiple direct delegations. This is due to the fact that, under the aforementioned assumptions, transitivity induces an inherent exponential dampening of influence. It is noteworthy that these agents may or may not be endpoints, which are sometimes referred to as "gurus". The designation of an endpoint is an incidental attribute that offers scant insight into the influence of an agent.

It stands to reason that any influential agent would be prudent to establish an outgoing delegation as a precaution against the possibility of forgetting to act directly, and given that it is plausible to expect some correlation between activity and influence, the probability that an endpoint is actually influential decreases even further.

%\vspace{1cm}

%\subsection{Power Persistence Index}
%\textbf{Perhaps we can define an index based on the expected number of votes?}

%\dvd{
%Some further thoughts to better understand the new index:
%\begin{itemize}
%    \item The delegation graph with error $p$ defines a stochastic matrix with diagonal $p$ and only one off-diagonal non-$0$ entry of value $1-p$. This is a simple Markov Chain.
%    \item The standard weight ($p = 0$) corresponds to the stationary distribution of that chain with all weight flowing to endpoints and cycles (a sort of 'eventual weight').
%    \item Andreas' expectation-based weight corresponds to the following concept in Markov chains: the weight of a point corresponds to the probability that weight flows from all other points to that point via shortest paths (probability at shortest hitting time for the node or first-passage probability of the node). I believe there is a simple proof for this ... I have to write it down.
%\end{itemize}
%}

%%%%%%%%%%%%%%%%%%%%%%%%%%%%%%%%%%%%%%%%%%%%%%%%

%\newpage
\section{Outlook} \label{sec:outlook}

\paragraph{Conclusions.} By giving center stage to some of the original core ideas underpinning liquid democracy, we argued for the repositioning of research on transitive delegation systems within a broad framework of adaptive representation. To show the usefulness of this adaptive-process perspective on liquid democracy, we also elaborated one novel idea for the quantification of influence in a transitive delegation systems. The new model shows how transitivity, rather than constituting a risk towards undue accumulation of power, can actually work effectively as a mechanism to keep the excessive accumulation of power at bay.

\paragraph{Adaptive Representation: Towards a Research Agenda.} We regard this paper only as an initial step towards the development of solid theoretical and, hopefully, empirical foundations for the study of liquid democracy as an adaptive representation system. We conclude by outlining a number of open lines of research, which we find of particular interest. We stress that we do not consider such a list exhaustive.

\begin{description}

    \item[Influence by delegation.] In this paper we proposed a novel model of influence in liquid democracy that incorporates an element of adaptivity. The technical appendix has developed a first analysis of the model, but much work on it is still needed to provide a comprehensive adaptivity-based theory of influence in liquid democracy. In particular, the mathematics of the model could be directly related to the concept of viscous democracy \cite{boldi2011viscous} and to centrality measures in network theory such as Katz centrality index \cite{katz1953new}.\footnote{We are indebted to Giorgos Papasotiropoulos for pointing out to us the relevance of the network centrality literature for the conceptualization of influence in liquid democracy.} Similarly, it would be important to understand how our notion of influence in liquid democracy relates to approaches based on power indices from voting theory \cite{zhang2021power,colley2023measuring}.
 
    \item[Delegation hierarchies.] As we already stressed in Section \ref{sec:LF}, delegation layers can overlay, thereby describing representation patterns that refine themselves from from broad topic areas down to specific issues. In the case of LiquidFeedback, such hierarchical structures comprise default delegations, policy area-specific defaults, and topic-specific delegations. Together, they facilitate the emergence of domain-specific representation patterns that guide how new decisions are approached. This architecture allows agents to delegate once at a general level, apply their preferences broadly, and refine them incrementally over time. Such a mechanism reduces the burden on agents, thereby enhancing the operational feasibility of liquid democracy in practice. By selecting a single default delegation, an agent can ensure the presence of a fallback mechanism for any future deliberation or decision-making scenario. Additionally, agents may choose to restrict the scope of a delegation to a specific policy area or even to an individual issue. In LiquidFeedback, when multiple levels of delegation apply, the most specific applicable delegation takes priority automatically. Importantly, the presence of a delegation does not constrain an agent’s capacity to participate directly at any time. The domain-specificity of the representation patterns that can emerge through hierarchical delegations is a feature that clearly differentiates liquid democracy from other forms of representation, from representation by voting, to representation by sortition \cite{valsangiacomo2025european}. Yet, we currently lack models that can robustly account for this important feature of liquid democracy.
    
%    \item[Automatic suspension.] When it comes to such direct activity, automatic suspension is a key design feature whereby an existing delegation is temporarily overridden whenever an agent acts directly—such as by discussing, supporting, or voting—without requiring the agent to explicitly recall their delegation. This mechanism preserves individual autonomy by allowing agents to intervene directly at any point, while at the same time retaining all non-recalled delegations for future use.
    
    \item[Delegation evolution.] In the prototypical liquid democracy model, as exemplified by the LiquidFeedback implementation \cite{behrens2014principles}, agents are cognizant of the decisions made by their proxies. Dissatisfaction with these choices can prompt a reevaluation process. Agents' responses to events are reflected in the delegation graph, thereby introducing a temporal dimension \cite{behrens2022temporal} to liquid democracy. This temporal dimension tracks the adaptivity of the system over time and could be interpreted as a collective form of learning. This fluidity of delegations can enable the development of epistemic specialization, where agents build reputations in particular domains—such as environmental policy, finances, traffic, etc.—and receive delegated authority accordingly. The ability to make decisions that are both responsive and accurate is directly linked to maintaining influence in future situations of a similar nature. The use of learning theory---broadly conceived---within social choice and collective decision making is currently in its infancy \cite{armstrong2019machine},\footnote{See the SCaLA workshop series (Workshop on Social Choice and Learning Algorithms), \url{https://sites.google.com/view/scala25/}} but we believe it to be of special relevance for the analysis of the adaptivity of representation patterns in liquid democracy. 
\end{description}

We conclude with a general methodological point on the use of models as tools for analysis and, ultimately, intervention, which we believe should provide perspective on any modeling effort in liquid democracy research. The development of models for liquid democracy can, ideally, validate existing implementations of the system (as in LiquidFeedback), or suggest clear pathways for improving existing solutions. However, models need then to incorporate---albeit abstractly---all the relevant features of the deployment context. As previously exemplified in relation to transitive delegations and power accumulation (Section \ref{par:accumulation}), a too simplistic model of influence (what we referred to as potential weight), may suggest that limiting the number of allowed delegation steps or imposing costs on delegation transfers can constitute desirable interventions. Yet, in light of a better-informed model (like the expected weighted model we proposed here), such interventions may yield the opposite effect by  inadvertently encouraging agents to bypass intermediary layers and delegate directly to perceived influential voters. This behavior can lead to the aggregation of direct delegations and, paradoxically, foster the accumulation of less constrained power—precisely the outcome such measures aim to mitigate. If models are to be used to support the design of interventions, it is essential to carefully consider their ability to capture the core features of the problem that are relevant for practice.

\bigskip

\appendix

\medskip
\noindent
{\Large {\bf Technical Appendix}}
\medskip

\noindent
This technical appendix provides a more extensive mathematical presentation of the key ideas introduced in the main body of the paper.

\section{Preliminaries: A Simple Model of Delegations}

We start with a very simple model of delegations in liquid democracy \cite{christoff2017binary,brill2018pairwise,kahng2021liquid} that has been predominant in the literature so far, with the exception of \cite{brillcycles}.

\subsection{Delegations} 
Let $\N = \set{1, \ldots, n}$ be the set of users/voters. A {\em delegation profile} is a function $\D: \N \to \N$, where $\D(i)$ denotes $i$'s proxy. For $X \subseteq \N$, $\D(X) = \set{i \in \N \mid \exists j \in X: D(j) = i}$ denotes the set of images of elements of $X$, that is, the set of proxies of users in $X$. If $X = \D(X)$ and there exists no $Y$ such that $Y \subset X$ and $Y = \D(Y)$ we say that $X$ is ultimate. If $\set{i}$ is ultimate, then we refer to $i$ as an endpoint voter. Clearly, $i$ is an endpoint if and only if $D(i) = i$. We denote by $\D^\star(i)$ the closest (by length of delegation path) user to $i$ that belongs to an ultimate set. When $\D^\star(i)$ belongs to an ultimate set which is a singleton then, intuitively, $\D^\star(i)$ is the ultimate proxy who votes for $i$. When $\D^\star(i)$ belongs to an ultimate set that is not a singleton, then $i$ delegates into a cycle. We furthermore denote by $\N^\star$ the set of endpoints of $\D$. 

If $X$ is ultimate, but it is not a singleton, then any element of $X$ is the image of an $|X|$-fold iteration of $D$, that is: $i = \D^{|X|}(i)$. In other words, there exists a delegation cycle that links all users in $X$.   

We will refer to delegation profiles also as {\em delegation graphs} where $i \to j$ if and only if $\D(i) = j$. So, delegation graphs are directed graphs over $\N$. Structurally, they consist of trees whose roots lay on points in cycles, that is, trees whose roots belong to ultimate sets.

\subsection{Potential Voting Weight} 
Delegation profiles represent the allocation of voting rights at the very moment in which ballots are counted after voting. So, given $\D$, the (potential) voting weight that $i$'s ballot carries is:
\begin{equation} \label{eq:naive}
    w_{\D}(i) = 1 + \sum_{j \in \N} \1_{D^\star(j) = i},
\end{equation}
that is, $i$'s weight (i.e., $1$) plus the weight of those that would use $i$ as ultimate proxy in $D$ (i.e., $\sum_{j \in \N} \1_{\D^\star(j) = i}$) if $i$ were to vote given the delegations specified in $\D$. This is the notion of weight illustrated on the left-hand side of Figure \ref{fig:potential_expected}.
%If there are ultimate sets in $X$ that are not endpoints, then $\sum_{i \in \N} w_\D(i) < n$ as some voting weight is lost through delegation cycles. 

%%%%%%%%%%%%%%%%%%%%%

\section{A Model of Influence When Delegations Are Suspendible} 
%\an{Given that it's a Latin word, should it be "suspendible"?}

We now extend the above simple model of delegations to capture the process-centered feature of liquid democracy according to which users select proxies as defaults to direct participation, should the user decide not to vote. The result is a model of delegation that incorporates explicitly the possibility of transitioning between two modes of representation: representation via direct voting, vs. representation via proxy.

\subsection{Suspendible Delegations}

To do this we extend delegation graphs with a function that assigns to all users, who are not endpoints, a probability $p_i$ corresponding to the probability that $i$ casts a ballot directly, instead of delegating to $D(i)$ (which therefore happens with probability $1 - p_i$). The endpoints of $\D$ are assumed to cast a direct ballot with probability $1$. 

\begin{definition}[Suspendible delegation profile]
    A suspendible delegation profile is a tuple $\CalD = \tuple{\N,\D, p}$ where:
    \begin{itemize}
        \item $\N$ is a set of users/agents;
        \item $\D$ is a delegation profile over $\N$;
        \item $p : \N \backslash \N^\star \to [0,1]$, that is, $p$ assigns to each agent a probability of voting directly.
    \end{itemize}
    We will often write $p_i$ instead of $p(i)$.
\end{definition}
A suspendible delegation profile therefore fully describes the probability $\P(ij)$ of any given delegation $i \to j$ to take place, as follows:
\begin{equation} \label{eq:d_matrix}
    \P(ij) = 
    \begin{cases}
        1  & \text{if   } D(i)=j, i = j \\
        1- p_i & \text{if   } D(i)=j, i \neq j \\
        p_i & \text{if   } D(i) \neq j, i = j \\
        0 & \text{otherwise} 
    \end{cases}
\end{equation}

Equation \eqref{eq:d_matrix} defines an $n \times n$ stochastic matrix $[\P(ij)]$ (called {\em delegation matrix}) where each row has value $1$ or $p_i$ on the diagonal entry, and at most one off-diagonal entry with value $1 - p_i$. Equivalently, the matrix defines a weighted directed graph in the standard way. In what follows we will sometimes refer to the delegation matrix $[\P(ij)]$ of $\CalD$ also simply as $\CalD$. 

\medskip

There are two natural ways in which to interpret $\CalD$ and its matrix $[\P(ij)]$.
\begin{description}
    \item[Lottery on delegation graphs.] Each $\CalD = \tuple{\N, \D, p}$ is equivalent to a lottery over the set of all delegation profiles $\N^\N$. Such a lottery is the product probability distribution obtained from all Bernoulli trials $p_i$ with outcomes $\set{i, \D(i)}$ where $i$ is not an endpoint of $\D$. So, the probability of sampling from $\CalD$ an arbitrary profile $\D' \in \N^\N$ is 
\begin{equation}
    \P(\D') = \prod_{ij \in \D'} \1_{ij \in \D}\P(ij). 
\end{equation}
    \item[Flow of voting weight.] Matrix $[\P(ij)]$ can be viewed to describe the probability with which the voting right of a user ``flows'' to another user via delegations. The matrices defined by Equation \eqref{eq:d_matrix} are instances of the class of weighted delegation profiles as studied in the liquid democracy literature \cite{zhang2022tracking} as well as of command games \cite{hu2003authoritya,hu2003authorityb} from the cooperative game theory literature.
\end{description}

%%%%%%%%%%%%%%%%%%

\subsection{Influence as Expected Weight}

We first explore the implications for the understanding of power with suspendible delegations following the first one of the two interpretations above. 

We can think of the voting weight of a given user $i$ as a random variable determined by the lottery $\CalD$ defined over all possible delegation graphs. So, $i$'s influence in $\CalD$ can be thought of as the expectation on $i$'s weight when the delegation graph is sampled from $\CalD$ ({\em expected weight}): 
\begin{equation}
    \phi_i = \E_{D \sim \CalD}[w_i] = \sum_{\D \in \N^\N} \P(\D)\cdot w_i(\D). \label{eq:pow1}
\end{equation}
It is worth noticing that this is a special case of the notion of power in weighted delegation graphs studied in \cite{zhang2022tracking}.

\begin{example}
    Recall Example \ref{ex:direct}. We will recompute the value obtained earlier, but now following Equation \eqref{eq:pow1}. In that example, the suspendible delegation profile defines a lottery over delegation graphs where each graph corresponds to the success of $m$ specific trials out of $10000$. The number of delegations reaching the endpoint voter is therefore distributed binomially with parameters $10000$ (trials) and $0.5$ (probability of success). The mean of such distribution is the number of trials times the probability of success: $5000$. As all graphs contain the endpoint voter, its vote is counted with probability $1$. This gives us the expected weight of $5000 + 1$. 
\end{example}

%%%%%%%%%%%%%%%%%%%%%%%

\subsection{Influence as Flow}

We now explore the second interpretation and will show it leads to an equivalent definition to Equation \eqref{eq:pow1}. We think of the flow of voting weight in the delegation graph as a stochastic process $(X_k)_{0 \leq k}$ where $X_k$ denotes the state (i.e., user) to which the weight flows over time starting from state/user $X_0$. In other words, we view the flow of power in a suspendible delegation profile as a Markov chain.

\paragraph{Stationary weight distributions of suspendible delegations.}

The stationary distribution of the stochastic process $(X_k)_{0 \leq k}$ defined by a delegation matrix $\CalD = [\P(ij)]$ is a probability distribution $p: \N \to [0,1]$ ($\sum_{i \in \N} p(i) = 1$) such that:
\begin{equation}
    p \CalD = p, \label{eq:stationary0}
\end{equation}
that is,
\begin{equation}
    \sum_{j \in N} p(i) \P(ji) = p(i). \label{eq:stationary}
\end{equation}

What Equation \eqref{eq:stationary} intuitively captures is a form of ultimate state of the distribution of voting power that keeps delegations `in equilibrium' in the sense that the voting weight allocated to each user would not change any more---would not `flow' somewhere else---as an effect of delegations. 
In fact in the literature on command games such an equation is referred to as {\em authority equilibrium equation} \cite{hu2003authorityb}. It follows that such an equilibrium weight allocation corresponds actually to the allocation of voting weight that ultimate nodes (be them endpoints, or members of a delegation cycle) accrue via delegations, assuming the weight is distributed equally across all users (at time $0$). It incorporates a principled way to think of the weight that remains present in delegation cycles when these are not broken by a member of the cycle voting directly. In this sense, it provides a more fine-grained formalization of a conception of weight in liquid democracy proposed in the literature, which interprets the weight of cycles as null (see, e.g., \cite{christoff2017binary}).

%Importantly, such conception of weight is different from the allocation by expected weight of Equation \eqref{eq:pow1} determines.

\medskip

General results in the theory of Markov chains guarantee that the stationary distribution of Equation \eqref{eq:stationary} exist for any given suspendible delegation profile.\footnote{Essentially because every diagonal entry of matrix $[\P(ij)]$ is non-zero and the matrix is therefore aperiodic.}
We can then use Equation \eqref{eq:stationary} to determine, for each $\CalD$ what such a stationary weight distribution would be, starting from the equal allocation of weight where each user has a fraction $\frac{1}{n}$ of total weight.

\begin{fact}
    Let $\CalD$ be a delegation matrix. The stationary distribution $p$ of $\CalD$ (assuming a uniform initial distribution $[\frac{1}{n} \ldots \frac{1}{n}]$)\footnote{This assumption is needed (although any distribution would be possible) because, in general, $\CalD$ may typically consist of multiple disconnected components and is therefore reducible.} is such that:
    \begin{itemize}
        \item $p(i) = \frac{\sum_{j \in \N} \D^*(j)}{n}$, if $i$ is an endpoint;
        \item $p(i) = \frac{\sum_{j \in U}\sum_{k \in \N} \D^*(k)}{n}\cdot\frac{1}{|U|}$, if $i$ belongs to an ultimate set $U$ but is not an endpoint;
        \item $p(i) = 0$, otherwise.
    \end{itemize}
\end{fact}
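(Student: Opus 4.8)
The plan is to read the statement entirely through finite Markov chain theory, exploiting the structural description recalled earlier: every delegation graph decomposes into trees whose roots sit on ultimate sets, where an ultimate set is either a singleton (an endpoint) or a genuine delegation cycle. First I would identify the communicating classes of the chain $[\P(ij)]$. Since a non-endpoint $i$ stays put with probability $p_i$ and otherwise moves to $\D(i)$, the only closed (recurrent) classes are exactly the ultimate sets: an endpoint is an absorbing singleton, and a cycle $U$ is an irreducible recurrent class on which the restricted chain is a cyclic walk with self-loops. Every tree node is transient, because along its delegation path the off-diagonal edge is taken with positive probability $1-p_i$ at each step (this needs $p_i<1$ for genuine non-endpoints), so the walk leaves the node almost surely and, once inside an ultimate set, never returns. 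I will denote the stationary vector by $\pi$ to avoid clashing with the voting probabilities $p_i$; since the chain is reducible, $\pi$ is not unique in general, and the role of the uniform initial distribution $\frac{1}{n}\mathbf{1}$ is precisely to single one out.

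Second I would invoke the standard fact that in a finite chain any stationary distribution assigns mass zero to transient states, which disposes of the third bullet immediately and reduces the problem to (a) how much of the initial mass each recurrent class absorbs, and (b) how that mass spreads inside the class. For (a), starting from a user $j$ the walk is absorbed with probability one into the ultimate set containing $\D^\star(j)$, so the mass captured by an endpoint $i$ is $\frac{1}{n}\sum_{j\in\N}\1_{\D^\star(j)=i}$ and the mass captured by a cycle $U$ is $\frac{1}{n}\sum_{j\in U}\sum_{k\in\N}\1_{\D^\star(k)=j}$. For an endpoint this entire mass rests on the single node, which is exactly the first bullet.

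The remaining and genuinely delicate step is (b): showing the absorbed mass of a cycle $U$ is split as $\frac{1}{|U|}$ per node. Here I would solve the internal balance equations restricted to $U$. Writing $U$ as $a_0\to a_1\to\cdots\to a_{m-1}\to a_0$, the incoming transitions to $a_k$ are the self-loop at $a_k$ and the delegation from $a_{k-1}$, so balance at $a_k$ reads $\pi(a_k)=\pi(a_k)p_{a_k}+\pi(a_{k-1})(1-p_{a_{k-1}})$, which telescopes to the identity that $\pi(a_k)(1-p_{a_k})$ is constant around the cycle. I expect this to be the main obstacle, because it shows the within-cycle stationary vector is proportional to $\frac{1}{1-p_{a_k}}$, hence uniform precisely when the voting probabilities agree on $U$. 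I would therefore close by specialising to the uniform-probability regime used throughout the paper: under equal $p_i$ on each cycle the constant forces $\pi$ to be flat, and scaling by the absorbed mass from (a) yields the factor $\frac{1}{|U|}$ of the second bullet. I would flag explicitly that the exact $\frac{1}{|U|}$ split reflects the equal-probability assumption rather than holding for an arbitrary $\CalD$, and that for non-uniform $p_i$ the correct split is the normalised $\frac{1}{1-p_i}$ weighting.
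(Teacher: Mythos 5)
Your proposal is correct and, on the delicate point, more careful than the paper's own argument. The paper proves the Fact by direct verification: it reduces to a single component, then checks case by case that the proposed vector satisfies the stationary equation, asserting for a cycle $U$ that the vector with value $\frac{1}{|U|}$ on $U$ and $0$ elsewhere ``can also be seen to satisfy'' Equation \eqref{eq:stationary}, and finally rescales by component size for the reducible case. You instead run the full Markov-chain decomposition --- recurrent classes are exactly the ultimate sets, transient states carry no stationary mass, absorption probabilities determine how the uniform initial mass is distributed over the recurrent classes, and the within-class balance equations determine the split inside each class. Both routes share the same case structure, but yours buys two things the paper's verification glosses over: the explicit hypotheses $p_i>0$ (aperiodicity, so that the limit from the uniform start exists) and $p_i<1$ (so that tree nodes are genuinely transient), and, more importantly, the telescoping identity $\pi(a_k)(1-p_{a_k})=\pi(a_{k-1})(1-p_{a_{k-1}})$ around a cycle. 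That identity shows the restricted stationary vector on $U$ is proportional to $\frac{1}{1-p_{a_k}}$, so the asserted $\frac{1}{|U|}$ split holds \emph{only} when the voting probabilities agree on the cycle --- indeed, plugging the constant vector into the balance equation forces $p_{a_k}=p_{a_{k-1}}$ for all $k$. Your closing caveat is therefore not a hedge but a genuine correction: the Fact as stated is false for heterogeneous $p_i$ on a cycle, and the paper's proof implicitly assumes the uniform-probability regime without saying so. The rest of your argument (zero mass on transient states for the third bullet, absorbed mass $\frac{1}{n}\sum_{j}\1_{\D^\star(j)=i}$ for an endpoint, component mass for a cycle) matches the intended reading of the paper's somewhat loose notation exactly.
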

\begin{proof}
First, let us assume $\CalD$ consists of one component. In such a case, the three statements reduce to:
    \begin{itemize}
        \item $p(i) = 1$, if $i$ is an endpoint;
        \item $p(i) = \frac{1}{|U|}$, if $i$ belongs to an ultimate set $U$ but is not an endpoint;
        \item $p(i) = 0$, otherwise.
    \end{itemize}
Let us reason by cases. If $i$ is an endpoint of $\CalD$, then the vector where for all $j$ $D^*(j) = i$, $p(j) = 0$, and where $p(i) = 1$ trivially satisfies Equation \eqref{eq:stationary}.
%restricted to the component of $\CalD$ that includes $i$. 
If $i$ belongs to an ultimate set $U$ but is not an endpoint, then $U$ consists of a cycle one member of which being $i$. For each element $k$ of $U$ we consider the number $\sum_{k \in \N} \D^*(k)$ of users delegating (via a shortest path) to $k$. So the total number of users with a path into $U$ is $\sum_{j \in U}\sum_{k \in \N} \D^*(k)$, which we normalize to $1$. 
%Let us know consider the component of $\CalD$ containing $U$. 
Then, the vector where for all $j$ such that $D^*(j) = i$, $p(j) = 0$, and where $p(i) = \frac{1}{|U|}$ can also be seen to satisfy Equation \eqref{eq:stationary}.
%restricted to the component of $\CalD$ that includes $U$. 
Finally, if $i$ does not belong to an ultimate set, by the structure of $\CalD$ it is connected to some ultimate set. The earlier two cases then establish also the last claim.

If $\CalD$ consists of more than one component, and we assume that the initial state of the stochastic process (initial voting weight) is distributed uniformly at random, then the ultimate set $U$ of each component is reached with probability equal to the size of the component divided by $n$. We thus obtain the values in the statement of the theorem.  
\end{proof}

\begin{example}
Figure \ref{fig:potential_stationary} illustrates the difference between potential weight and stationary weight (cf. Figure \ref{fig:potential_expected}). Notice that in the stationary weight case every member of the cycle recieves $\frac{1}{3}$ of the total weight ($9$) while the remaining agents have weight $0$.
\end{example}

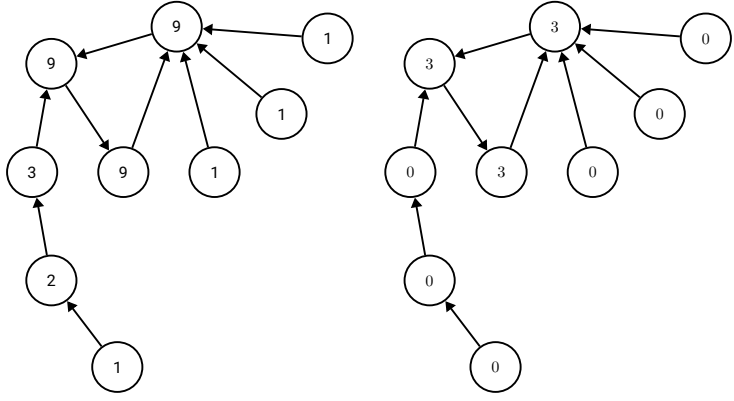
\begin{figure}[t]
\begin{center}
\scalebox{0.6}{
\begin{tikzpicture}

\node[mynode](n1) at (160:7){9};
\node[mynode](n2) at (180:5){9};
\node[mynode](n3) at (140:5){9};
\node[mynode](n4) at (180:7){3};
\node[mynode](n5) at (200:7){2};
\node[mynode](n6) at (220:6.7){1};
\node[mynode](n7) at (180:3){1};
\node[mynode](n8) at (140:2){1};
\node[mynode](n9) at (100:3){1};

\draw[myarrow](n1)--(n2);
\draw[myarrow](n2)--(n3);
\draw[myarrow](n3)--(n1);
\draw[myarrow](n4)--(n1);
\draw[myarrow](n5)--(n4);
\draw[myarrow](n6)--(n5);
\draw[myarrow](n7)--(n3);
\draw[myarrow](n8)--(n3);
\draw[myarrow](n9)--(n3);

\end{tikzpicture}
\hspace{0.5cm}
\begin{tikzpicture}

\node[mynode](n1) at (160:7){$3$};
\node[mynode](n2) at (180:5){$3$};
\node[mynode](n3) at (140:5){$3$};
\node[mynode](n4) at (180:7){$0$};
\node[mynode](n5) at (200:7){$0$};
\node[mynode](n6) at (220:6.7){$0$};
\node[mynode](n7) at (180:3){$0$};
\node[mynode](n8) at (140:2){$0$};
\node[mynode](n9) at (100:3){$0$};

\draw[myarrow](n1)--(n2);
\draw[myarrow](n2)--(n3);
\draw[myarrow](n3)--(n1);
\draw[myarrow](n4)--(n1);
\draw[myarrow](n5)--(n4);
\draw[myarrow](n6)--(n5);
\draw[myarrow](n7)--(n3);
\draw[myarrow](n8)--(n3);
\draw[myarrow](n9)--(n3);

\end{tikzpicture}
}
\end{center}
\caption{\textbf{Potential voting weight} (left) vs. \textbf{stationary voting weight} (right)}
\label{fig:potential_stationary}
\end{figure}

%%%%%%%%%%%%

\paragraph{First-passage probabilities of suspendible delegations.}

While Equation \eqref{eq:stationary} captures the equilibrium state of a weight allocation in a suspendible delegation profile, we are now interested instead in the probability that the weight of a given user $i$ reaches a given user $j$ via the delegation path (if any) that connects the two. This amounts to determining the so-called first-passage probabilities with which voting weight of each user reaches any other user. 

\medskip

The probability that $j$'s weight flows to $i$ via the shortest path $i \to i_1 \to \ldots \to i_{\ell} = j$ is 
\begin{equation}
    p(ji) = \prod_{1 \leq k \leq \ell} (1 - p(i_k))
\end{equation}
We can now think of the voting weight of a user $i$ as their own weight plus the sum of the first-passage probabilities from any other user $j$ to $i$, that is:
\begin{equation}
    \psi_i = 1+ \sum_{j \in \N} p(ji).
\end{equation}

This notion of influence is illustrated in the right-hand side of Figure \ref{fig:potential_expected}.

\subsection{Equivalence}

It is not difficult to show that the notion of weight based on first-passage probability is equivalent to the earlier one we introduced in the appendix, based on lotteries on delegation graphs (Equation \eqref{eq:pow1}). Intuitively, the weight that a user should expect based on the delegation paths made possible by a suspendible delegation profile is equivalent to the probability in which weight flows to the user from any other user in the profile.
\begin{theorem}
    For any suspendible delegation profile $\CalD$ and user $i$:
    \begin{equation}
        \phi_i = \psi_i.
    \end{equation}
\end{theorem}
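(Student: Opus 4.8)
The plan is to prove the equality by reducing both $\phi_i$ and $\psi_i$ to the same per-pair quantity --- the probability that $j$'s vote reaches $i$ along the delegation path connecting them --- and then summing over $j$. First I would expand $\phi_i$ by linearity of expectation. Since $w_i(\D) = 1 + \sum_{j \in \N} \1_{\D^\star(j) = i}$ is affine in the indicators, Equation \eqref{eq:pow1} gives
\begin{equation}
\phi_i = \E_{\D \sim \CalD}\left[1 + \sum_{j \in \N} \1_{\D^\star(j) = i}\right] = 1 + \sum_{j \in \N} \P_{\D \sim \CalD}[\D^\star(j) = i].
\end{equation}
Comparing with the definition $\psi_i = 1 + \sum_{j \in \N} p(ji)$, the theorem reduces to the single identity $\P_{\D \sim \CalD}[\D^\star(j) = i] = p(ji)$ for every pair $j,i$.

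Next I would establish this identity by exploiting the functional structure of $\D$. Because a delegation profile is a function, every user has out-degree one, so there is at most one forward path from $j$, hence at most one path $j = i_0 \to i_1 \to \cdots \to i_\ell = i$ reaching $i$; if none exists, both sides vanish. When such a path exists, I would argue that in the sampled profile the event $\{\D^\star(j) = i\}$ holds exactly when every agent strictly preceding $i$ on the path declines to vote (so the weight is not intercepted earlier) while $i$ acts as the receiver. Using the lottery interpretation, the agents' choices are \emph{independent} Bernoulli trials, with $i_k$ declining to vote with probability $1 - p(i_k)$; since $i_0, \dots, i_{\ell-1}$ are distinct, the event factorizes and its probability is $\prod_{0 \le k < \ell}(1 - p(i_k)) = p(ji)$, matching the first-passage probability.

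The step I expect to be most delicate is pinning down the receiver condition so that no spurious factor for $i$'s own choice survives: a naive reading of the sampled profile would attach a stray factor $p_i$ whenever $i$ is not an endpoint of $\D$, yielding $p_i \cdot p(ji)$ rather than $p(ji)$. Here I would invoke the standing convention, made explicit around Equation \eqref{eq:expect} and reiterated in the discussion of cycles, that $i$'s weight is evaluated under the assumption that $i$ casts a ballot directly, so $i$'s outgoing delegation is suspended and $i$ is treated as an endpoint of the current instance. Under this convention the receiver factor is $1$, which is precisely what makes $\P[\D^\star(j) = i]$ equal $p(ji)$. The same convention disposes of cycles: if $i$ lies on a delegation cycle, conditioning on $i$ voting breaks the cycle at $i$, so the path from any $j$ entering the cycle terminates well-definedly at $i$ and the factorization applies verbatim. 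Reassembling the per-pair identity into the sum then gives $\phi_i = 1 + \sum_{j \in \N} p(ji) = \psi_i$, completing the proof.
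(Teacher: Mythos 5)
Your proposal is correct and follows essentially the same route as the paper: expand $\phi_i$ by linearity of expectation and reduce the claim to the per-pair identity $\P_{\D \sim \CalD}[\D^\star(j) = i] = p(ji)$, which the paper itself justifies only with a brief verbal gloss about summing path probabilities. Your explicit factorization of the event into independent Bernoulli trials along the unique forward path, and your handling of the receiver convention (that $i$ is assumed to cast a ballot, so no stray $p_i$ factor and no unresolved cycle at $i$ survives), actually makes the delicate step more rigorous than the paper's own argument.
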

\begin{proof}
The claim is proven by the following series of equivalences:
\begin{align*}
    \phi_i & = \sum_{\D \in \N^\N} \P(\D')\cdot w_i(\D) \\
    & = \sum_{\D \in \N^\N} \P(\D)\cdot (1 + \sum_{j \in \N} \1_{D^\star(j) = i}) & \mbox{by Equation \eqref{eq:naive}} \\
    & = \sum_{\D \in \N^\N} \P(\D) \cdot  1 + \sum_{\D \in \N^\N} \P(\D) \cdot \sum_{j \in \N} \1_{D^\star(j) = i} \\
    & = 1 + \sum_{\D \in \N^\N} \P(\D) \cdot \sum_{j \in \N} \1_{D^\star(j) = i} \\
    & = 1 + \sum_{j \in \N} p(ji) \\
    & = \psi_i.
\end{align*}
The second-last equivalence holds for the following reason. We consider the sum over the potential weight of $i$ in an arbitrary graph $D'$, that is, the number of delegation paths linking some user $j$ to $i$, weighted by the probability of $D'$ (according to the given suspendable delegation profile $\CalD$). This amounts to the sum over the probability of each such path from $j$ to $i$, which in turn is precisely the sum over all first-passage probabilities from any $j$ to $i$.
\end{proof}

\subsection{Properties of Power When Delegations Are Suspendible}

The following fact formalizes and generalizes some simple observations made in Section \ref{sec:expected}.

\begin{fact}
Let $\CalD = \tuple{\N, \D, p}$ be a suspendible delegation profile such that for all $i,j \in \N$, $p_i = p_j = p$. Then:
\begin{itemize}
    \item for any $i \in \N$, $\lim_{p \to 0} \E_{\D \sim \CalD}[w_i] = w_i(\D)$.
    \item for any $i \in \N$,  $\lim_{p \to 1} \E_{D \sim \CalD}[w_i] = 1$.
    \item for any $i \in \N$, any delegation path of length $\ell$ to $i$ contributes
    \begin{equation}
        \sum_{1 \leq k \leq \ell} p^k = p \frac{ (1 - p^\ell)}{1 - p}
    \end{equation}
    to $\E_{D \sim \CalD}[w_i]$. Furthermore, as $\ell \to \infty$ the contribution of the path to $i$'s expected weight tends to $\frac{p}{1-p}$ and is therefore upper-bounded by $1$.
\end{itemize}
\end{fact}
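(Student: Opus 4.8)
The plan is to obtain all three items from the expected-weight expression, using the equivalence just proved, $\phi_i = \psi_i = 1 + \sum_{j \in \N} p(ji)$, so that I can read each claim off the first-passage probabilities $p(ji)$ rather than off the lottery over graphs. The single structural observation I would isolate at the outset is that $\N$ is finite, so $\sum_{j \in \N} p(ji)$ is a finite sum and every limit in the common probability $p$ may be taken term by term; this is what makes the two boundary claims routine and removes any need for a uniform-convergence argument.

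For the first item I would let $p \to 0$. At $p = 0$ no non-endpoint agent ever votes directly, so every delegation is retained and each first-passage probability along the deterministic paths of $\D$ tends to $1$; substituting into $\psi_i$ collapses the expected weight onto $1 + \sum_{j \in \N} \1_{\D^\star(j) = i}$, which is exactly the potential weight $w_i(\D)$ of Equation \eqref{eq:naive}. For the second item I would let $p \to 1$: now every non-endpoint agent votes directly, so every delegation is suspended and each first-passage probability to a node at delegation distance at least $1$ tends to $0$; only $i$'s own vote survives in $\psi_i$, giving the limit $1$. In both cases finiteness of $\N$ lets me exchange the limit with the summation, so nothing further is needed.

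For the third item I would restrict attention to a single incoming chain $\cdots \to i_k \to \cdots \to i_1 \to i$ and account for its contribution to $\E_{\D \sim \CalD}[w_i]$ in isolation; by linearity of expectation this contribution adds to the rest of the graph independently of the other incoming branches, so it is legitimate to speak of ``the contribution of one path.'' The node at delegation distance $k$ contributes $p^k$, so the chain as a whole contributes the finite geometric series $\sum_{1 \leq k \leq \ell} p^k = p\,\frac{1 - p^\ell}{1 - p}$. Letting $\ell \to \infty$, the partial sums increase monotonically to $\frac{p}{1-p}$, so the transitive contribution of an arbitrarily long chain stays finite and is bounded above by $1$; this is the quantitative form of the exponential dampening of transitive influence emphasised in Section \ref{sec:expected}.

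The hard part will not be the geometric series but making the per-path decomposition legitimate on a profile that may contain cycles. I would handle this through the suspension convention already recorded in Section \ref{sec:expected}: on the event defining $\E[V_t]$ the target $i$ votes, its own outgoing delegation is suspended, and the relevant incoming structure is the in-tree of shortest delegation paths feeding into $i$. Because $\D$ is a function, every source node $j$ has a unique forward delegation path, so the first-passage formula assigns it a single well-defined passage probability and linearity of expectation sums these contributions into $\psi_i$ with each node counted exactly once. The point that requires genuine care is the treatment of a path that reaches $i$ through a delegation cycle: I would verify that, once the suspension of the voting node is applied to break the cycle, such a path is still governed by the same product formula, so that the closed-form geometric evaluation above may be quoted unchanged.
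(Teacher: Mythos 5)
The paper states this Fact without any proof, so there is no official argument to compare against; judged on its own terms, your proposal has two genuine gaps, both at points where you reproduced the statement instead of deriving it.

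First, the third bullet. In the model, with $p$ the uniform probability of voting directly, the first-passage contribution of a node at delegation distance $k$ is $(1-p)^k$ — this is the paper's $p(ji) = \prod_{1 \leq k \leq \ell}(1 - p(i_k))$, and it is exactly what you yourself use in the first two bullets, where delegations succeed as $p \to 0$. Your step ``the node at delegation distance $k$ contributes $p^k$'' contradicts both the definition and your own bullet 1: on your reading a long chain's contribution would vanish as $p \to 0$, whereas your proof of bullet 1 has every upstream node's contribution tend to $1$. A blind derivation should have produced $\sum_{k=1}^{\ell}(1-p)^k = (1-p)\frac{1-(1-p)^\ell}{p}$ and then observed that the statement's series $\sum_{k} p^k$ matches only under the substitution $p \mapsto 1-p$, i.e., with its $p$ read as the \emph{non-voting} probability. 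Relatedly, the cap you assert flatly is not unconditional: $\frac{p}{1-p} \leq 1$ holds iff $p \leq \frac{1}{2}$ (equivalently, voting probability at least $\frac{1}{2}$). The paper's own table shows the bound fails otherwise: a long chain with voting probability $0.01$ yields $100$ expected votes, i.e., a per-path contribution of $99$, not at most $1$. Deriving rather than quoting the series would have surfaced this needed restriction.

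Second, the first bullet. Your term-by-term limit of $\psi_i = 1 + \sum_{j \in \N} p(ji)$ as $p \to 0$ gives $1$ plus the number of users with \emph{any} delegation path to $i$, since $p(ji) = (1-p)^{d_j} \to 1$ for every upstream $j$. This equals $w_i(\D) = 1 + \sum_{j}\1_{\D^\star(j)=i}$ only when $i$ is an endpoint: for an intermediate node $i$ not in an ultimate set, no $j$ has $\D^\star(j) = i$, so $w_i(\D) = 1$ while your limit is $1$ plus the size of $i$'s upstream tree. The ``collapse'' you invoke is therefore false for non-endpoints. The safe route for bullets 1 and 2 is the lottery interpretation of Equation \eqref{eq:pow1} directly: as $p \to 0$ the product measure concentrates on $\D$ itself, since $\P(\D' = \D) = (1-p)^{|\N \setminus \N^\star|} \to 1$ and $w_i$ is bounded by $n$, hence $\E_{\D \sim \CalD}[w_i] \to w_i(\D)$; symmetrically, as $p \to 1$ it concentrates on the profile in which everyone votes directly, giving limit $1$. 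Your appeal to finiteness of $\N$ to exchange limit and sum is correct but is not where the difficulty lies; likewise your care about cycles is unnecessary, since $p(ji)$ is defined along the shortest path and contains no factor for $i$, so no extra suspension argument is required there.
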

It follows that expected weight grows sublinearly in the length of delegation paths, unlike in the case of potential weight, and can never yield more than one extra vote per path in expectation. As a consequence, we can make the following simple but insightful observation.
\begin{corollary}
    Let $\CalD = \tuple{\N, \D, p}$ be a suspendible delegation profile. Then for any $i \in \N$
    \begin{equation}
        \E_{\D \sim \CalD}[w_i] \leq f_\CalD(i),
    \end{equation}
    where $f_\CalD(i)$ is the number of maximal delegation paths reaching $i$ in $\CalD$.
\end{corollary}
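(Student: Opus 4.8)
The plan is to reduce the statement to the first-passage expression for expected weight and then bound that expression path-by-path using the geometric decay established in the preceding Fact. First I would invoke the equivalence theorem to rewrite $\E_{\D \sim \CalD}[w_i] = \phi_i = \psi_i = 1 + \sum_{j \in \N} p(ji)$, so that the claim reduces to bounding the total first-passage weight $\sum_{j} p(ji)$ flowing into $i$. I would also recall the structural observation from the preliminaries that the subgraph lying upstream of $i$ (the agents whose delegation paths reach $i$) is an in-tree: each such agent has a unique directed path to $i$, out-degree at most one, and the sources of this in-tree are exactly the starting points of the maximal delegation paths reaching $i$, of which there are $f_\CalD(i)$ by definition.

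The core step is then to reorganize $\sum_j p(ji)$ along these maximal paths. Each upstream agent lies on at least one maximal path, so I would fix an assignment of every agent to a single maximal path through it, allowing the first-passage contributions to be summed path-by-path without double counting the internal branching nodes shared by several paths. Along any fixed maximal path, the factors $(1 - p_v) \le 1$ accumulate multiplicatively, so the contributions $p(\cdot\,i)$ of the agents assigned to that path are dominated by the geometric series $\sum_{1 \le k \le \ell}(1-p)^k$; by the Fact this series is controlled by its limit, contributing at most one extra vote to $i$'s expected weight. Summing this per-path bound over the $f_\CalD(i)$ maximal paths, together with $i$'s own guaranteed vote, would yield $\E_{\D \sim \CalD}[w_i] \le f_\CalD(i)$.

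I expect the main obstacle to be exactly the sharing of internal nodes among maximal paths. Because a branching agent lies on several maximal paths simultaneously, a naive sum of per-path geometric bounds overcounts its contribution, so the assignment of agents to paths must be chosen compatibly with the tree — assigning each branching node to exactly one of its descending paths — and one must verify that the contributions retained on each path still form a subseries of a single geometric series bounded by one extra vote. A clean way to discharge this bookkeeping is induction on the depth of the in-tree, using the recursion $g(v) = 1 + \sum_{t}(1 - p_{c_t})\, g(c_t)$ for the expected weight $g(v)$ accumulated at a node $v$ from its children $c_t$. The inductive step is where the leaf count and the per-path geometric bound must be reconciled, and it is here that the damping factors $(1 - p_{c_t}) \le 1$ do the essential work of preventing the accumulated weight from outgrowing the number of maximal paths reaching $i$.
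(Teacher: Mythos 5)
Your plan follows the derivation the paper itself intends (the corollary is stated with no separate proof, as an immediate consequence of the preceding Fact): pass to $\E_{\D \sim \CalD}[w_i] = \psi_i = 1 + \sum_{j \in \N} p(ji)$ via the equivalence theorem, organize the upstream in-tree into the $f_\CalD(i)$ maximal paths, and bound each path's contribution by the geometric series of the Fact. The genuine gap is that this argument does not yield the stated inequality: even granting that each maximal path contributes at most one extra expected vote, what you obtain is $1 + f_\CalD(i)\cdot 1 = 1 + f_\CalD(i)$, and the leading $1$ for $i$'s own ballot is not absorbed by any of the per-path bounds. Your closing sentence asserts $\E_{\D\sim\CalD}[w_i] \le f_\CalD(i)$ where the computation you describe gives $1 + f_\CalD(i)$. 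The inductive repair you sketch exhibits the same failure at its smallest instance: for a node $v$ whose only child $c$ is a leaf, the recursion gives $g(v) = 1 + (1-p_c)\cdot 1 = 2 - p_c$, while the number of maximal paths reaching $v$ is $1$, so the invariant $g(v) \le L(v)$ already fails at depth one whenever $p_c < 1$. This is not a defect of your bookkeeping but of the statement itself: the right-hand side of Figure \ref{fig:potential_expected} contains exactly this configuration (a node with expected weight $1.5$ and a single maximal incoming path). What the path decomposition actually proves is $\E_{\D\sim\CalD}[w_i] \le 1 + f_\CalD(i)$.

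Separately, the per-path bound of ``at most one extra vote'' that you import from the Fact is itself conditional, and you should not treat it as unconditional. A maximal path of length $\ell$ with uniform voting probability $p$ contributes $\sum_{1\le k\le \ell}(1-p)^k$, whose supremum over $\ell$ is $\frac{1-p}{p}$; this is at most $1$ only when $p \ge \frac{1}{2}$ (the Fact's expression $\frac{p}{1-p}$ has the mirror-image problem under its own reading of $p$). For small $p$ a single long chain gives expected weight close to $\frac{1}{p}$ against $f_\CalD(i)=1$, so no reorganization of the first-passage sum along maximal paths can rescue the inequality. A correct version of the corollary needs both the additive correction (a right-hand side of $1 + f_\CalD(i)$, or a path count that includes the trivial path at $i$) and an explicit hypothesis ensuring each path's geometric series is at most $1$. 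Your instinct that the damping factors $(1-p_{c_t})$ do the essential work is right in spirit, but they only do enough work when they are at most $\frac{1}{2}$; flagging this, rather than proving the inequality as stated, is the correct outcome here.
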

In other words, expected weight can be bounded by the total number of maximal (that is, that cannot be extended further) delegation paths pointing to a user, independently of their length. This stands in stark contrast with the behavior of potential weight, for which such a bound does not hold.

\addcontentsline{toc}{section}{Acknowledgment}

\section*{Acknowledgments}

%Funded by the European Union. Views and opinions expressed are however those of the author(s) only and do not necessarily reflect those of the European Union or European Research Executive Agency. Neither the European Union nor the granting authority can be held responsible for them.

The authors acknowledge support by the European Union under the Horizon Europe \href{https://perycles-project.eu/}{Perycles} project (Participatory Democracy that Scales).

\bigskip
%\begin{figure}[b]
\begin{center}
    \includegraphics[width=0.5\textwidth]{EN_FundedbytheEU_RGB_POS.png}
\end{center}
%\end{figure}

\addcontentsline{toc}{section}{References}
\bibliographystyle{plain}

%\bibliography{lit}

\end{document}